\documentclass[runningheads,12pt]{llncs}

\usepackage{amsmath,amsfonts,amssymb}

\setlength{\oddsidemargin}{0.0in} \setlength{\evensidemargin}{0.0in}
\setlength{\topmargin}{-0.50in} \setlength{\textheight}{9.0in}
\setlength{\textwidth}{6.0in}

\usepackage{subfigure}
\usepackage{algorithm,caption}
\usepackage{algpseudocode}
\usepackage{graphicx,color}
\usepackage{verbatim}
\usepackage[bookmarks=false]{hyperref}

\newcommand{\Gr}{Gr\"{u}nbaum }
\newcommand{\keywords}[1]{\par\addvspace\baselineskip
\noindent\keywordname\enspace\ignorespaces#1}

\begin{document}

\title{A New Rose : The First Simple Symmetric 11-Venn Diagram}
\author{Khalegh Mamakani
\and Frank Ruskey}
\institute{Dept. of Computer Science, University of Victoria,
Canada.}
\authorrunning{Mamakani and Ruskey}

\maketitle

\begin{abstract}
A symmetric Venn diagram is one that is invariant under rotation, up to a relabeling of curves.
A simple Venn diagram is one in which at most two curves intersect at any point.
In this paper we introduce a new property of Venn diagrams called crosscut symmetry, which is
  related to dihedral symmetry.
Utilizing a computer search restricted to crosscut symmetry we found many simple symmetric Venn diagrams
  with 11 curves.
This answers an existence question that has been open since the 1960's.
The first such diagram that was discovered is shown here.

\keywords{Venn diagram, crosscut symmetry, symmetric graphs, hypercube}
\end{abstract}

\section{Introduction}

Mathematically, an \emph{$n$-Venn diagram} is a collection of $n$ simple closed curves in the plane with the
following properties: (a) Each of the $2^n$ different intersections of the open interiors or exteriors of the curves is
a non-empty connected region; (b) there are only finitely many points where the curves intersect.
If each of the intersections is of only two curves, then the diagram is said to be \emph{simple}.
A \emph{$k$-region} ($0 \le k \le n$) in an $n$-Venn diagram is a region which is in the interior of exactly $k$ curves.
In a \emph{monotone} Venn diagram every $k$-region is adjacent to at least one $(k-1)$-region (if $k>0$) and
  it is also adjacent to at least one $(k+1)$-region (if $k < n$).
Monotone Venn diagrams are precisely those that can be drawn with convex curves \cite{BGR_1999}.
The diagrams under consideration in this paper are both monotone and simple.

A $n$-Venn diagram is \emph{symmetric} if it is left fixed (up to a relabeling of the curves) by a rotation of the
  plane by $2\pi/n$ radians.
Interest in symmetric Venn diagrams was initiated by Henderson in a 1963 paper in which he showed that a symmetric
  $n$-Venn diagram could not exist unless $n$ is a prime number \cite{He_1963} (see also \cite{WW_2008}).
Of course, it is easy to draw symmetric $2$- and $3$-Venn diagrams using circles as the curves, but it was not
  until 1975 that Gr\"{u}nbaum \cite{Gr_1975} published a simple symmetric $5$-Venn diagram, one that could be drawn using ellipses.
Some 20 years later, in 1992, simple symmetric $7$-Venn diagrams were discovered independently by Gr\"{u}nbaum \cite{Gr_2_1992} and by
  Edwards \cite{Ed_1998}.
The total number of non-isomorphic simple $7$-Venn diagrams that are convexly drawable is also known \cite{MMR_2012}.

However, the construction of a simple symmetric 11-Venn diagram has eluded all previous efforts until now.
We know of several futile efforts that involved either incorrect constructions, or unsuccessful computer searches.

It should be noted that if the diagrams are not constrained to be simple, then Hamburger \cite{Ha_2002} was the first to
discover a (non-simple) symmetric 11-Venn diagram, and Griggs, Killian and Savage (GKS) have shown how to construct symmetric, but highly
  non-simple, $n$-Venn diagrams whenever $n$ is prime \cite{GKS_2004}.
These constructions, in a sense, are maximally non-simple since they involve points where all $n$ curves intersect.
Some progress towards ``simplifying" the GKS construction is reported in \cite{KRSW_2004}, but their approach could never succeed in
  producing truly simple diagrams.

Part of the interest in Venn diagrams is due to the fact that their geometric dual graphs are planar spanning subgraphs
  of the hypercube; furthermore, if the Venn diagram is simple then the subgraph is maximum in the sense that every
  face is a quadrilateral.
Symmetric drawings of Venn diagrams imply symmetric drawings of spanning subgraphs of the hypercube.
Some recent work on finding symmetric structures embedded in the hypercube is reported in
  \cite{Jordan_2010} and \cite{DMT_2012}.

\section{Crosscut symmetry}

 We define a \emph{crosscut} of a Venn diagram as a segment of a curve which sequentially ``cuts"
   (i.e., intersects) every other curve without repetition.
 Except for $n=2$ and $n=3$, where the symmetric 2-Venn and 3-Venn diagrams have 4 and 6 crosscuts respectively,
   a symmetric $n$-Venn diagram either has $n$ crosscuts or it has none.
 Referring to Figure \ref{Fig:VD_7:a}, notice that each of the 7 curves has a crosscut.

 \begin{lemma}
 If $n > 3$, then a symmetric $n$-Venn diagram has at most one crosscut per curve.
 \end{lemma}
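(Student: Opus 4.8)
The plan is to work entirely with the \emph{necklace of crossings} along a single curve. Fix a curve $C$ and orient it; reading off its intersection points with the other curves in cyclic order gives a circular sequence whose entries are labels drawn from the remaining $n-1$ curves. Because consecutive regions on the inner side of $C$ differ in exactly one coordinate, this necklace is the same thing as a closed walk on the subcube of regions interior to $C$, where step $k$ flips the coordinate named by the $k$-th entry. A crosscut of $C$ is then exactly a window of $n-1$ consecutive entries in which all $n-1$ labels occur (a \emph{rainbow window}); equivalently, a length-$(n-1)$ geodesic running from some interior region $v$ to its antipode $\bar v$, the region lying on the opposite side of every curve but $C$. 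The first thing I would record, and the fact I expect to carry the argument, is this antipodality: the two endpoints of a crosscut sit in regions that disagree in every coordinate except that of $C$ itself.

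Next I would bring in the symmetry. By Henderson's theorem $n$ is prime, so the rotation $\rho$ by $2\pi/n$ acts on the curves as a single $n$-cycle and on the regions with all orbits of size $n$ except the two fixed regions $\mathbf{0}$ (the outer region) and $\mathbf{1}$ (the central $n$-region containing the centre of rotation); recall also that each of the $2^n$ subsets labels exactly one region. Suppose for contradiction that $C$ carries two crosscuts, i.e.\ its necklace contains two rainbow windows, cut out by disjoint arcs $X$ and $X'$. By $\rho$-symmetry every curve then carries two, so the diagram has $2n$ crosscuts splitting into two $\rho$-orbits. I would then exploit planarity inside the disc bounded by $C$: the portions of the other curves lying inside form chords whose crossing pattern is constrained, and the two antipodal geodesics $X$ and $X'$ force a specific interleaving of these chords. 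The aim is to show that such an interleaving must either repeat an interior subset-region or leave one of the $2^{n-1}$ required interior regions unrealised, contradicting the defining property of a Venn diagram.

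The main obstacle is precisely this last step, and it is where the hypothesis $n>3$ has to enter. A purely numerical attack is too weak: the total crossing count $2^n-2$ only bounds the number of crosscuts per curve by a quantity exceeding $1$ for every $n\ge 4$, so counting alone cannot close the argument. What is needed is the joint use of planarity, the exact-once appearance of every subset, and the prime-order rotation to pin down how two rainbow windows would have to be positioned relative to the fixed central region. I would also keep the case $n=3$ in view as a sanity check: there the window length is only $2$ and the forced alternating necklace $C_2C_3C_2C_3$ genuinely contains two rainbow windows, which is exactly why the lemma excludes it; understanding what destroys this alternation once the window length grows should reveal the correct shape of the contradiction for $n>3$.
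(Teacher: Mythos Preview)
Your proposal is not a proof but a plan, and you explicitly flag the missing step yourself. More importantly, the sentence where you discard the numerical approach is the actual error: you write that ``the total crossing count $2^n-2$ only bounds the number of crosscuts per curve by a quantity exceeding $1$,'' but this overlooks a structural fact that turns the count into an equality rather than an inequality.

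Here is what you are missing. A crosscut of $C$ runs from the arc where $C$ borders the outermost (empty) region to the arc where $C$ borders the innermost (full) region; that is exactly the content of your own antipodality remark, read on the exterior side of $C$. In any Venn diagram a curve borders a given region along at most one arc, so $C$ has a unique outer arc and a unique inner arc. If $C$ had two crosscuts they would both have to begin on that single outer arc and end on that single inner arc; being distinct segments of the simple closed curve $C$, they are therefore the \emph{two} complementary arcs of $C$ between the outer arc and the inner arc, and together they account for every intersection on $C$. Hence $C$ carries exactly $2(n-1)$ intersection points. By the $n$-fold rotational symmetry every curve does, and since each intersection lies on two curves the diagram has $n(n-1)$ intersection points in total. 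But a simple $n$-Venn diagram has exactly $2^n-2$ intersection points, and $n(n-1)=2^n-2$ has no solution for $n>3$. (The same touching-a-face-once observation also immediately rules out three or more crosscuts, since the outer face would then be bordered twice.)

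So the paper's proof is a two-line counting argument; your elaborate machinery of necklaces, rainbow windows, and planarity of chords is unnecessary here, and the place where you dismissed counting is precisely where the real proof lives.
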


 \begin{proof}
 A curve of a Venn diagram touches a face at most once.
 Thus a curve in any Venn diagram can not have three or more crosscuts, because that curve
 would touch the outer face (and the innermost face) at least twice.  Now suppose that some
 curve $C$ in an $n$-Venn diagram has two crosscuts.  Then those crosscuts must start at the same segment of $C$
 on the outer face, and finish at the same segment of $C$ on the innermost face.
 Thus, curve $C$ contains exactly $2(n-1)$ intersections with the other curves.
 If the Venn diagram is symmetric, then there must be a total of $n(n-1)$ intersection points.
 On the other hand, a simple symmetric Venn diagram has exactly $2^n-2$ intersection points.
 Since $n(n-1) = 2^n-2$ has a solution for $n = 1,2,3$, but not for $n > 3$, the lemma is proved.
 \qed
 \end{proof}

\begin{figure}
\centering{
\subfigure[]{
\includegraphics[scale=.25]{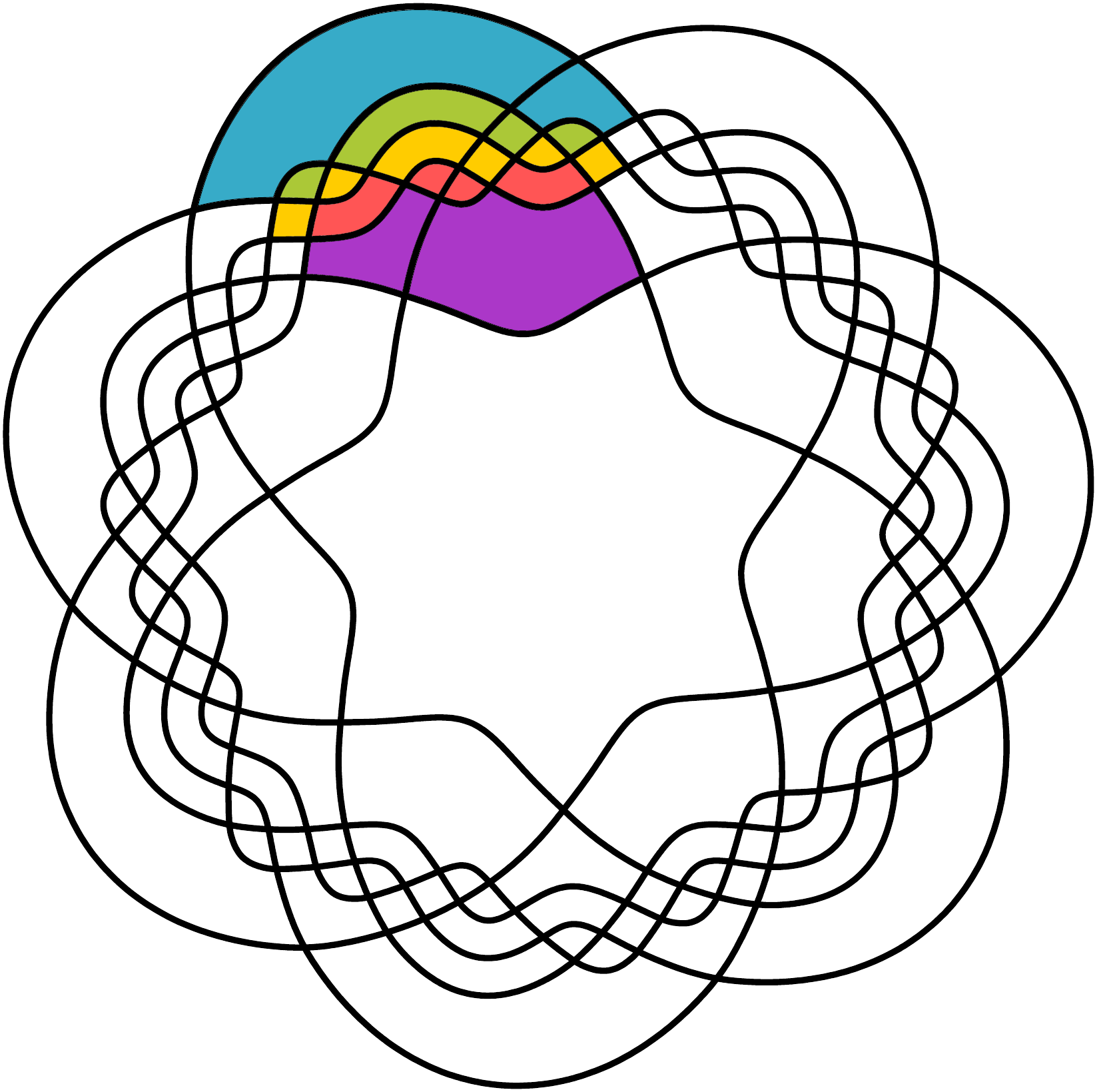}
\label{Fig:VD_7:a}
}
\subfigure[]
{
\includegraphics[scale=.70]{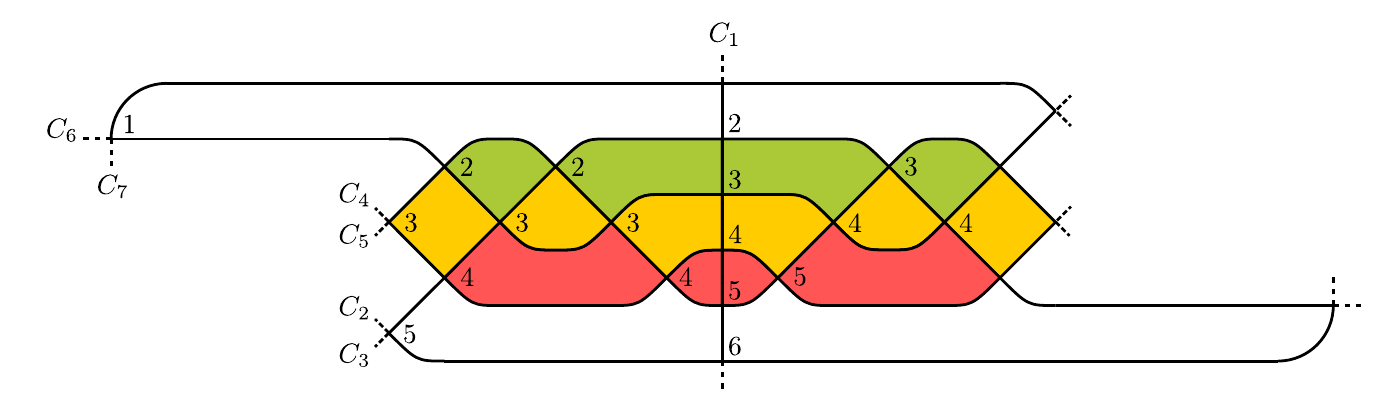}
\label{Fig:VD_7:b}}
}
\caption{\subref{Fig:VD_7:a} A simple rotationally symmetric monotone 7-Venn diagram with a cluster colored (shaded).
\subref{Fig:VD_7:b} The cylindrical representation of a cluster of the diagram, showing the reflective
aspect of crosscut symmetry.}
\label{Fig:VD_7}
\end{figure}

A \emph{clump} in a Venn diagram is a collection of regions that is bounded by a simple closed path of curve segments.
The \emph{size} of a clump is the number of regions that it contains.
Aside from the innermost face and the outermost face, a rotationally symmetric $n$-Venn diagram can be partitioned
   into $n$ congruent clumps, each of size $(2^n-2)/n$; in this case we call the
   clump a \emph{cluster} --- it is like a fundamental region for the rotation, but omitting the parts of the fundamental region
   corresponding to the full set and to the empty set.
Referring again to Figure \ref{Fig:VD_7:a}, a cluster has been shaded, and this cluster is redrawn in Figure \ref{Fig:VD_7:b}.
Notice that the cluster of Figure \ref{Fig:VD_7:b} has a central shaded section which has a reflective symmetry about the crosscut.
The essential aspects of this reflective symmetry are embodied in the definition of crosscut symmetry, given below.

\begin{definition}
 Given a rotationally symmetric $n$-Venn diagram, we label the curves as $C_1, C_2, \ldots, C_n$ according to the clockwise order in which
   they touch the unbounded outermost face.
 Assume that we have a cluster with the property that every curve intersects the cluster in a segment and that
   cluster $S_k$ contains the crosscut for curve $C_k$.
 Let $L_{i,k}$ be the list of curves that we encounter as we follow $C_i$ in the cluster $S_k$ in clockwise order, and
   let $\ell_{i,k}$ denote the length of $L_{i,k}$.
 A rotationally symmetric $n$-Venn diagram has \emph{crosscut symmetry} if it can be partitioned into $n$ such clusters
   $S_1, S_2, \ldots , S_n$ in such a way that for every cluster $S_k$, for any $i \ne k$,
   the list $L_{i,k}$ is palindromic, that is, for $1 \le j \le \ell_{i,k}$, we have $L_{i,k}[j] = L_{i,k}[\ell_{i,k}-j+1]$.
\end{definition}

Figure \ref{Fig:VD_7:a} shows a simple rotationally symmetric 7-Venn diagram which also has crosscut symmetry.
In the Survey of Venn diagrams it is known as M4 \cite{RW_2004}.
There is only one other simple 7-Venn diagram with crosscut symmetry; it is dubbed ``Hamilton" by Edwards \cite{Ed_1998}.
Hamilton also has ``polar symmetry" but is not used here because it might cause confusion with the crosscut symmetry.
Cluster $S_1$ of the diagram is shown in Figure \ref{Fig:VD_7:b} where a segment of $C_1$ is the crosscut.
The list of crossing curves for each curve in the cluster $S_1$ is shown below.
\begin{alignat}{1}
L_{1,1} &= [C_2,C_5, C_4, C_6, C_3, C_7]\notag\\
L_{2,1} &= [C_3,C_1,C_3]\notag\\
L_{3,1} &= [C_2,C_4,C_6,C_5,C_1,C_5,C_6,C_4,C_2]\notag\\
L_{4,1} &= [C_5,C_3,C_5,C_1,C_5,C_3,C_5]\notag\\
L_{5,1} &= [C_4,C_6,C_3,C_6,C_4,C_1,C_4,C_6,C_3,C_6,C_4]\notag\\
L_{6,1} &= [C_7,C_5,C_3,C_5,C_1,C_5,C_3,C_5,C_7]\notag\\
L_{7,1} &= [C_6,C_1,C_6]\notag
\end{alignat}

Imagine a ray issuing at some point in the innermost region of an $n$-Venn diagram $V$ and sweeping the surface of the diagram in a clockwise order. If $V$ is simple and monotone, we can always deform the curves of $V$ continuously such that at any moment the ray cuts each curve exactly once and no two intersection points of $V$ lie on the ray at the same time \cite{BGR_1999}. Let $\pi$ be the vector of curve labels along the ray where $\pi(1)$ is the outermost curve and $\pi(n)$ is the innermost curve.
At each intersection point of $V$ a pair of adjacent curves cross. Therefore, we can represent the whole diagram using a sequence of length $2^n-2$ of curve crossings, where a crossing of curves $\pi(i)$ and $\pi(i+1)$, for $1 \le i < n$, is indicated by an entry of value $i$ in the \emph{crossing sequence}. See \cite{MMR_2012} for more details on this representation of simple monotone Venn diagrams. For a simple rotationally symmetric $n$-Venn diagram, the first $(2^n-2)/n$ elements of the crossing sequence are enough to represent the entire diagram; the remainder of the crossing sequence is formed by $n-1$
concatenations of this sequence.
For example, if the "ray" initially starts a small distance from the left boundary of the cluster shown in Figure \ref{Fig:VD_7:a}, then the
first $18 = (2^7-2)/7$ elements of the crossing sequence could be the following list; we write ``could" here since crossings can sometimes occur
in different orders and still represent the same diagram (e.g., taking $\rho =3,1,5,2,4$ gives the same diagram); see the remark below.
\[
\underbrace{1,3,2,5,4}_\rho,\underbrace{3,2,3,4}_\alpha,\underbrace{6,5,4,3,2}_\delta,\underbrace{5,4,3,4}_{\alpha^{r+}}
\]
We encourage the reader to verify that this sequence is correct by referring to the crossing numbers shown to the right of the
intersections in Figure \ref{Fig:VD_7:b}.

\begin{remark}
\label{rem:swap}
If $j,k$ is an adjacent pair in a crossing sequence $\mathcal{C}$ and $|j-k| > 1$, then
the sequence $\mathcal{C'}$ obtained by replacing the pair $j,k$ with the pair $k,j$ is also
a crossing sequence of the same diagram.
\end{remark}

\begin{theorem}\label{Thm:Crosscut}
A simple monotone rotationally symmetric $n$-Venn diagram is crosscut symmetric if and only if it can be represented by a crossing sequence of the form $\rho, \alpha, \delta, \alpha^{r+}$ where
\begin{itemize}
\item $\rho$ is $1,3,2,5,4, \ldots, n-2, n-3$.
\item $\delta$ is $n-1, n-2, \ldots, 3, 2$.
\item $\alpha$ and $\alpha^{r+}$ are two sequences of length $(2^{n-1}-(n-1)^2)/n$ such that $\alpha^{r+}$ is obtained by reversing $\alpha$ and adding $1$ to each element; that is, $\alpha^{r+}[i]=\alpha[|\alpha|-i+1]$.
\end{itemize}
\end{theorem}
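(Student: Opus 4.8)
The plan is to prove both directions by setting up a dictionary between the crossing sequence of a single cluster and the per-curve crossing lists $L_{i,k}$. By rotational symmetry it suffices to analyze one cluster, say $S_1$, whose crosscut belongs to $C_1$; its crossing sequence has length $m=(2^n-2)/n$. First I would fix the dictionary: as the sweeping ray crosses the cluster, an entry $i$ records a crossing of the two curves currently occupying radial ranks $i$ and $i+1$, so by tracking the evolving permutation $\pi$ I can label every entry by the unordered pair of curves it crosses. Reading off, for a fixed $C_j$, the partner curve of each crossing in which $C_j$ participates recovers $L_{j,1}$ up to reordering; in a monotone simple diagram a non-crosscut curve is met by the ray in the same angular order in which its crossings occur along the curve, so (after handling the cluster boundary) sweep order and curve order coincide. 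Establishing this agreement is the first step.

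Next I would locate the crosscut inside the sequence. Since $C_1$ touches the outer face and, as a crosscut, meets every other curve exactly once, it must pass monotonically from the innermost rank to the outermost rank; reading the adjacent transpositions that carry it through the ranks yields exactly $\delta = n-1, n-2, \ldots, 2$, its last crossing being contributed by the leading entry of the neighbouring (relabeled) cluster. The single crossing of any $C_i$ with $C_1$ is then the unique point where $C_i$ meets the crosscut, and this is precisely the center about which $L_{i,1}$ must be palindromic. Hence crosscut symmetry is equivalent to the reflective symmetry of the cluster whose axis is the crosscut $C_1$.

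I would then show that, across this axis, the reflection acts on the two blocks flanking the crosscut as the reverse-plus-one operation. The reflection fixes the crosscut crossings and sends each crossing $\{C_a,C_b\}$ on one side of $C_1$ to the identical crossing on the other side, reversing the sweep order; moreover, passing from one side of $C_1$ to the other raises or lowers the rank of every curve by one, so a crossing recorded at position $p$ reappears at position $p+1$. This is exactly $\alpha^{r+}[i] = \alpha[|\alpha|-i+1]+1$, and matching lengths forces $|\alpha|=(2^{n-1}-(n-1)^2)/n$. The crossings that are neither on the crosscut nor paired with a mirror image within the cluster make up $\rho$; their pattern is pinned down by requiring the result to be a valid crossing sequence of a simple monotone symmetric diagram, and a direct computation of the induced adjacent transpositions gives $\rho = 1,3,2,5,4,\ldots,n-2,n-3$. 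Conversely, starting from a sequence of the stated form, I would reconstruct the cluster, use the reverse-plus-one relation to verify directly that every $L_{i,1}$ is a palindrome centered at its crossing with $C_1$, and invoke Remark~\ref{rem:swap} to absorb the harmless reorderings of non-adjacent crossings.

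The hard part will be the boundary bookkeeping underlying steps two and three: the crosscut of $C_1$ and the palindrome axis do not sit strictly inside one cluster but straddle the boundary to the adjacent block, so the final crossing of the crosscut and the outermost entries of several palindromes are supplied by the neighbouring cluster. Keeping the rank shift consistent across this wrap-around — and thereby justifying both the exact shape of $\rho$ and the $+1$ in $\alpha^{r+}$ — is where the argument needs the most care, since the reverse-plus-one symmetry must be checked against the relabeling $C_i \mapsto C_{i+1}$ induced by the rotation rather than against the isolated cluster.
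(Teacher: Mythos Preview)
Your overall architecture matches the paper's: work in one cluster $S_1$, identify the crosscut of $C_1$ with the descending block $\delta$, and argue that the two sides of the crosscut are related by ``reverse and add one''. The paper implements exactly this via two rays sweeping inward from the left and right borders, maintaining the invariant that the right curve-vector $\sigma$ is the cyclic shift of the left curve-vector $\pi$ by one position; your ``reflection shifts every rank by one'' is the same invariant, phrased differently. So the core mechanism is right.

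The genuine gap is your treatment of $\rho$. You describe $\rho$ as the residue of crossings ``neither on the crosscut nor paired with a mirror image'' and assert that its form is ``pinned down by requiring the result to be a valid crossing sequence'', to be recovered by ``a direct computation''. That is not how the paper obtains it, and it is not clear your route works: validity of the crossing sequence alone does not force a zig-zag border (general rotationally symmetric clusters have many possible border shapes). In the paper, the zig-zag is derived \emph{from crosscut symmetry itself}: the palindrome condition forces $L_{n,1}=[C_{n-1},C_1,C_{n-1}]$, hence a specific crossing of $C_n$ with $C_{n-1}$ on the right border; rotational symmetry carries that to a crossing of $C_{n-1}$ with $C_{n-2}$ on the left border; the palindrome for $L_{n-1,1}$ then forces the next right-border crossing; and so on down. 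This inductive interplay of the palindrome condition with rotational symmetry is what produces $\rho=1,3,2,5,4,\ldots,n-2,n-3$, and it is also what supplies the \emph{base case} for the rank-shift invariant (the explicit initial $\pi$ and $\sigma$ after the zig-zag). Without it, your reflection argument has no anchor: you know the reflection should shift ranks by one, but you have not established the boundary configuration that makes this true from the outset. You correctly flagged the boundary bookkeeping as the hard part; the missing idea is that the palindrome hypothesis, applied curve-by-curve starting from $C_n$, is precisely what determines that boundary.
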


\begin{proof}
Let $V$ be a simple monotone rotationally symmetric $n$-Venn diagram with crosscut symmetry.  Assume that the curves of $V$ are labeled $C_1, C_2,\ldots, C_n$ according to their clockwise order of touching the outermost region. Consider a cluster of $V$ where a segment of $C_1$ forms the crosscut. Each curve of the diagram touches the outermost and innermost region exactly once. Therefore, $C_1$ only intersects $C_{n-1}$ at some point $u$ on the left border of the cluster before intersecting the crosscut.
Since the diagram is crosscut symmetric, $L_{n,1} = [C_{n-1},1,C_{n-1}]$ and so $C_n$ must intersect $C_{n-1}$ at some point
  $v$ on the right border of the cluster immediately below $C_1$.
Because of the rotational symmetry of $V$, the point $v$ is the image of some point $s$ on the left border under the rotation of  $2\pi/n$ about the center of the diagram.
At the point $s$ the curves $C_{n-1}$ and $C_{n-2}$ intersect.
Again, because of the crosscut symmetry, $L_{n-1,1} = [C_n,C_{n-2},\ldots,C_{n-2},C_n]$, and so there is a corresponding point $t$ on the right border
  where $C_{n-1}$ and ${C_{n-2}}$ intersect.
Continuing in this way, we can see that the the left and right border of a crosscut symmetric $n$-Venn diagram must have the ``zig-zag" shape as illustrated in Figure \ref{Fig:Crosscut}.

\begin{figure}
\centering{
\includegraphics[scale=0.4]{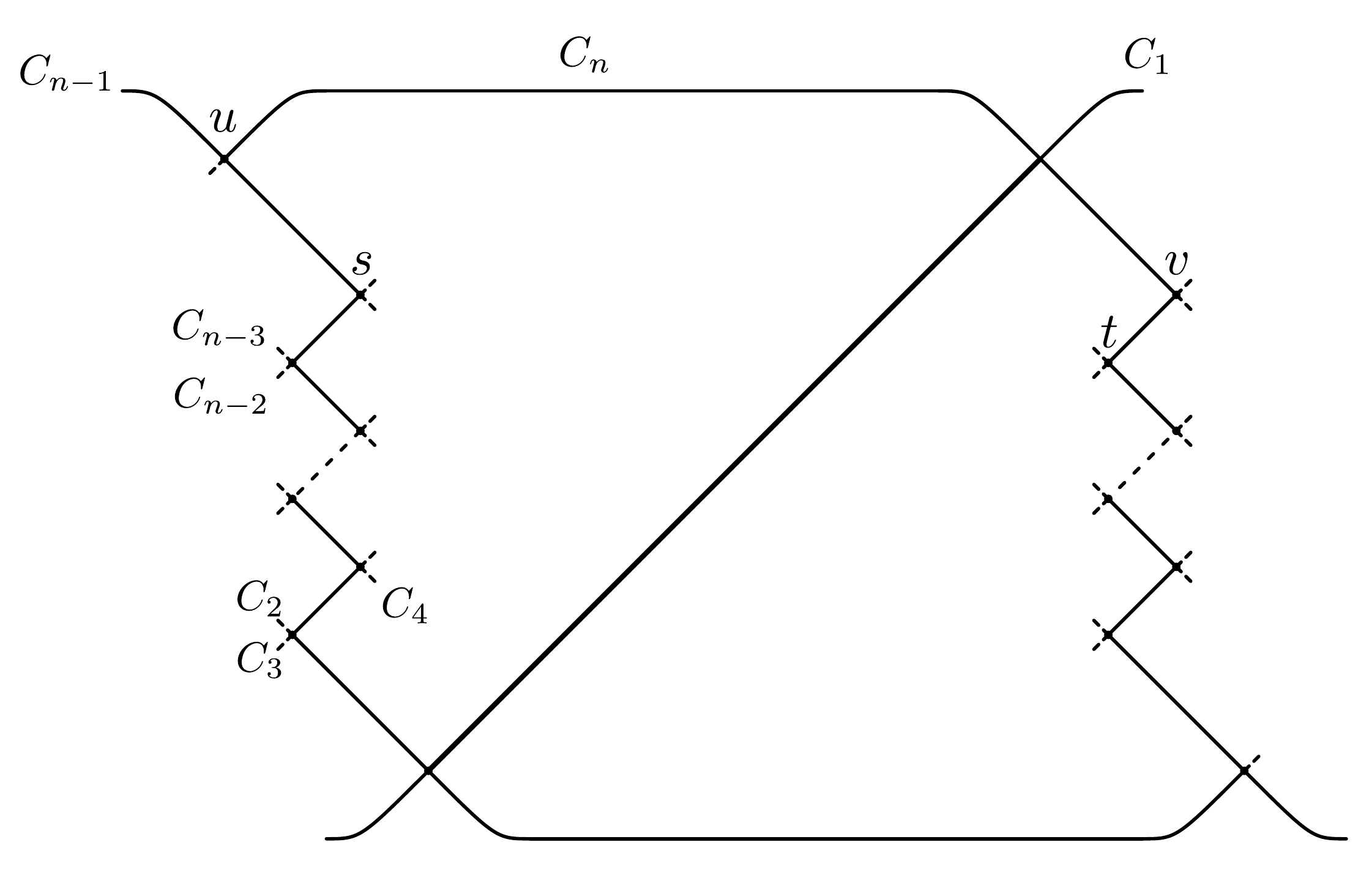}
}
\caption{A cluster of a crosscut symmetric simple monotone $n$-Venn diagram (cylindrical representation).}
\label{Fig:Crosscut}
\end{figure}

Consider two rays issuing from the center of the diagram which cut all the curves immediately before the left zig-zag border and immediately after
  the right zig-zag border of the cluster.
Let $\pi$ and $\sigma$ be the curve vectors along the rays as we move the rays in opposite directions towards the crosscut.
We prove by induction that as long as the two rays do not intersect the crosscut, at each moment $\sigma$ is a cyclic shift of $\pi$ one element to right;
   therefore if $\pi(i)$ and $\pi(i+1)$ intersect on the left side, then $\sigma(i+1)$ and $\sigma(i+2)$ must also intersect on the right side.
Initially $\pi$ and $\sigma$ are
\[
[C_{n-1}, C_{n}, C_{n-3}, C_{n-2}, \ldots, C_2, C_3, C_1] \text{ and } [C_1,C_{n-1}, C_{n}, C_{n-3}, \ldots, C_5, C_2, C_3]
\]
respectively. According to the ``zig-zag" shape of the borders, after $n-2$ crossings on the borders, $\pi$ and $\sigma$ will be
\[[C_{n}, C_{n-2}, C_{n-1}, C_{n-4}, \ldots, C_4, C_2, C_1] \text{ and } [C_1,C_{n}, C_{n-2}, C_{n-1}, \ldots, C_3, C_4, C_2].\]
Clearly, the $n-2$ crossings on the left border can be represented by the crossing sequence $\rho = 1,3,2,5,4, \ldots, n-2, n-3$.

Now suppose for any of the previous $k$ intersection points, $\sigma$ is always a cyclic rotation of $\pi$ one element to the right and suppose the next crossing occurs between $\pi(i)$ and $\pi(i+1)$. Since $\pi(i)=\sigma(i+1)$ and $\pi(i+1)=\sigma(i+2)$ the next crossing on the right side of crosscut must occur between $\sigma(i+1)$ and $\sigma(i+2)$, for otherwise the diagram would not be crosscut symmetric. Thus, $\sigma$ remains a cyclic shift of $\pi$ by one element to the right after the crossing. Therefore, by induction each crossing of $\pi(i)$ and $\pi(i+1)$ corresponds to a crossing of $\sigma(i+1)$ and $\sigma(i+2)$. Let $\alpha$ represent the sequence of crossings that follow the first $n-2$ crossings that occur on the left border.
Reversing the crossings of the right side of the crosscut, the entire crossing sequence of the cluster  is
\[
1,3,2,5,4,\ldots, n-2, n-3, \alpha, n-1, n-2, \ldots, 2,1, \alpha^{r+}, n-2, n-1,\ldots, 5,6,3,4,2,
\]
where $\alpha^{r+}[i]=\alpha[|\alpha|-i+1]$. Removing the $n-1$ elements representing the intersection points of the right border, we will get the required crossing sequence $\rho, \alpha, \delta, \alpha^{r+}$ of the diagram.

To prove the converse, suppose we are given a rotationally symmetric $n$-Venn diagram $V$ with the crossing sequence $\rho, \alpha, \delta, \alpha^{r+}$
  as specified in the statement of the theorem.
Then the crossing sequence of one cluster of $V$ is $\rho,\alpha,\delta,\alpha^{r+}, \rho,n-1$ where $\rho = [1,3,2,5,4, \ldots, n-2, n-3]$,
  and $\delta = [n-1, n-2, \ldots, 3,2]$.  The $\rho$ sequence indicates that the cluster has the ``zig-zag"-shaped borders.
Therefore, by Remark \ref{rem:swap}, the crossing sequence of the cluster can be transformed into an equivalent crossing sequence
\[
A=[1,3,2, \ldots, n{-}2, n{-}3, \alpha, n{-}1, n{-}2, \ldots, 3,2,1,\alpha^{r+}, n-2, n-3, \ldots, 3, 4, 2].
\]
Thus there is a crosscut in the cluster and
 \[
 A\left[|A| - i + 1\right] = A[i] + 1 \quad 1 \le i \le \frac{|A| - n + 1}{2},
 \]
 where $|A|$ is the length of $A$. Using similar reasoning to the first part of the proof,
   it can be shown that at each pair of crossing points corresponding to $A[i]$ and $A[|A|-i+1]$, the same pair  of curves intersect.
So, for each curve $C$ in the cluster as we move the rays along $C$ in opposite directions,
 we encounter the same curves that intersect $C$ and therefore the diagram is crosscut symmetric.
\qed
\end{proof}

Given a cluster $S_i$ of a crosscut symmetric $n$-Venn diagram where a segment of $C_i$ is the crosscut, there are the same number of regions on both sides of the crosscut. Furthermore, let $r$ be a $k$-region in $S_i$ that lies in the exterior of $C_i$ and interior to the curves in some
  set $\mathcal{K}$; then there is a corresponding $(k+1)$-set region $r'$ that is in the interior of $C_i$ and also interior
  to the curves in $\mathcal{K}$.

\section{Simple symmetric 11-Venn diagrams}
By Theorem \ref{Thm:Crosscut}, having the subsequence $\alpha$ of a crossing sequence we can construct the corresponding simple monotone crosscut symmetric $n$-Venn diagram. Therefore, for small values of prime $n$, an exhaustive search of $\alpha$ sequences may give us possible crosscut symmetric $n$-Venn diagrams. For example, for $n=3$ and $n=5$, $\alpha$ is empty and the only possible cases are the three circles Venn diagram and \Gr 5-ellipses. For $n=7$, the valid cases of $\alpha$ are $[3,2,4,3]$ and $[3,2,3,4]$.
The search algorithm is of the backtracking variety; for each possible case of $\alpha$, we construct the crossing sequence
  $S=\rho,\alpha, \delta, \alpha^{r+}$ checking along the way whether it currently satisfies the Venn diagram constraints, and
  then doing a final check of whether $S$ represents a valid rotationally symmetric Venn diagram.
Using this algorithm for $n=11$, we found more than $200,000$ simple monotone symmetric Venn diagrams which settles a long-standing open problem in this area.

Figure \ref{Fig:Newroz} shows the first simple symmetric 11-Venn diagram discovered.
It was discovered in March of 2012, so following Anthony Edwards' tradition of naming symmetric diagrams \cite{Ed_1998},
  we name it Newroz which means ``the new day" or ``the new sun" and refers to the
  first days of spring in Kurdish/Persian culture; for English speakers, Newroz sounds also like ``new rose", perhaps
  also an apt description.
Further illustrations of our 11-Venn diagrams may be found at
  \url{http://webhome.cs.uvic.ca/~ruskey/Publications/Venn11/Venn11.html}.
Below is the $\alpha$ sequence for Newroz.
\begin{align*}
[ &3 2 3 4 3 4 5 4 3 2 3 4 3 4 5 4 3 4 5 4 5 6 5 4 5 6 5 6 7 6 5 4 3 2 5 4 3 4 6 5 4 5 \\
  &6 7 6 7 8 7 6 5 6 5 4 3 4 5 7 6 5 4 6 5 8 7 6 5 4 5 7 6 5 6 8 7 6 5 4 6 5 7 6 5 6 7 ]
\end{align*}

\begin{figure}[t]
\centering{
\includegraphics[scale = 0.40]{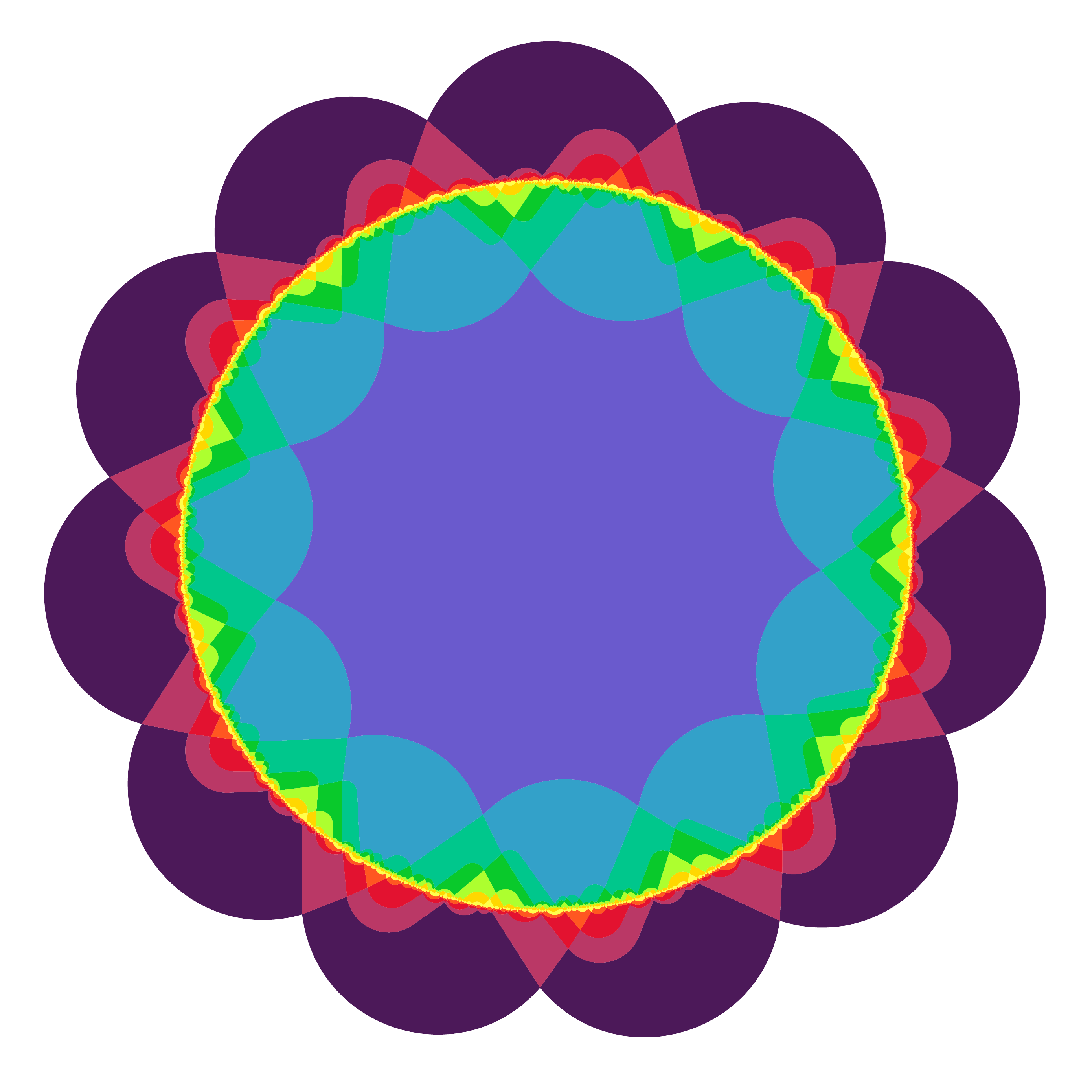}
\caption{Newroz, the first simple symmetric 11-Venn diagram.}
\label{Fig:Newroz}
}
\end{figure}

\begin{figure}
\centering{
\includegraphics[scale = 0.40]{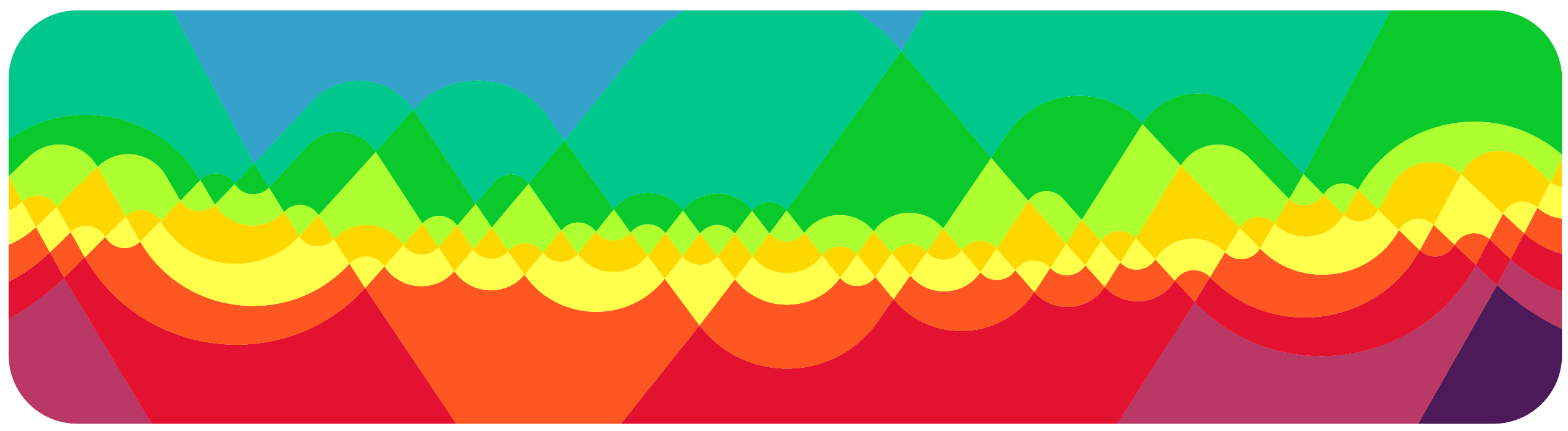}
\caption{A blow-up of part of Newroz.}
\label{Fig:Newroz_blowup}
}
\end{figure}

\begin{figure}
\centering{
\includegraphics[scale = 0.4]{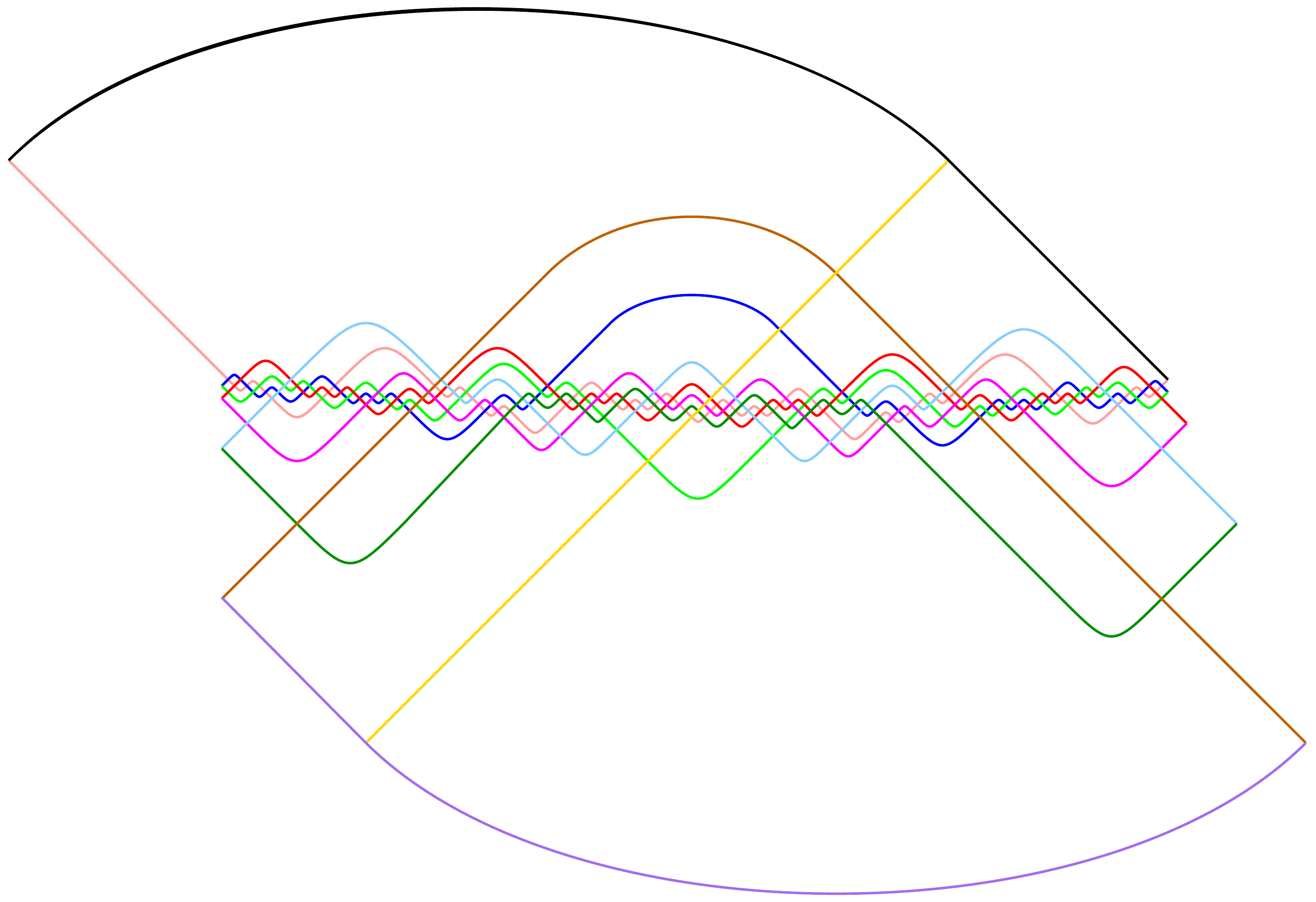}
}
\caption{A cluster of Newroz, the first simple symmetric 11-Venn diagram.}
\label{Fig:NewrozSlice}
\end{figure}

\section{Final Thoughts}

Define a symmetric Venn diagram to have \emph{dihedral symmetry} if it is
left invariant by flipping it over (and possibly rotating it after the flip).
It is interesting how close our diagram is to having dihedral symmetry.
By removing $n(n-1)/2 = 5$ edges, the remaining diagram can be drawn with dihedral
symmetry (Figure \ref{Fig:NewrozDihedral}(a)); the edges to be removed are the
diagonal edges that intersect a vertical bisector of the figure.
By slightly modifying those removed edges we can get a Venn
diagram with dihedral symmetry (Figure \ref{Fig:NewrozDihedral}(b)).
However, in the diagram of Figure \ref{Fig:NewrozDihedral}(b) there are curves that
intersect at infinitely many points, and there are exactly $n(n-3)/2 = 4$ points where 3 curves meet.

\begin{figure}
\includegraphics[scale = 0.3333]{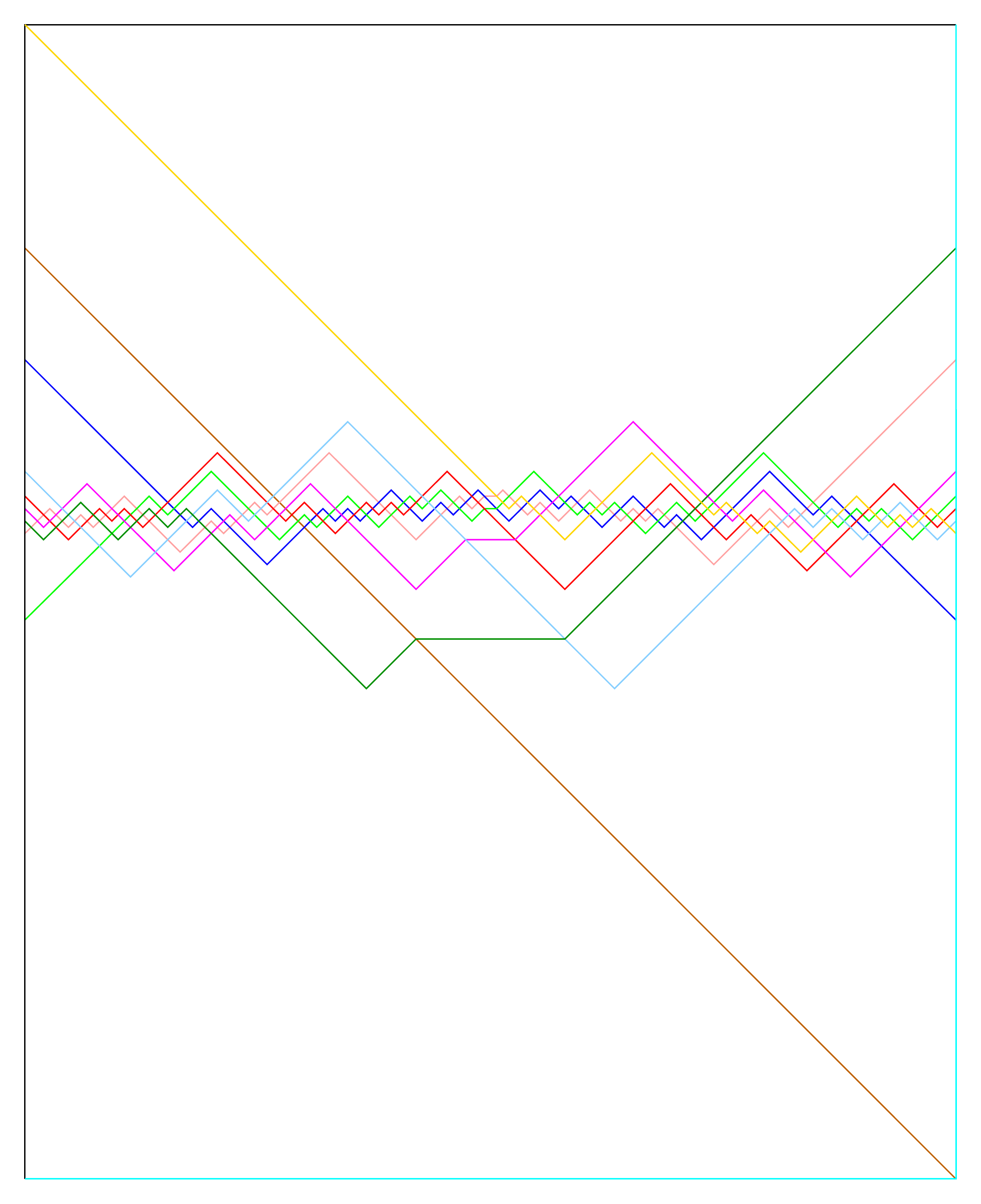}
\includegraphics[scale = 0.3333]{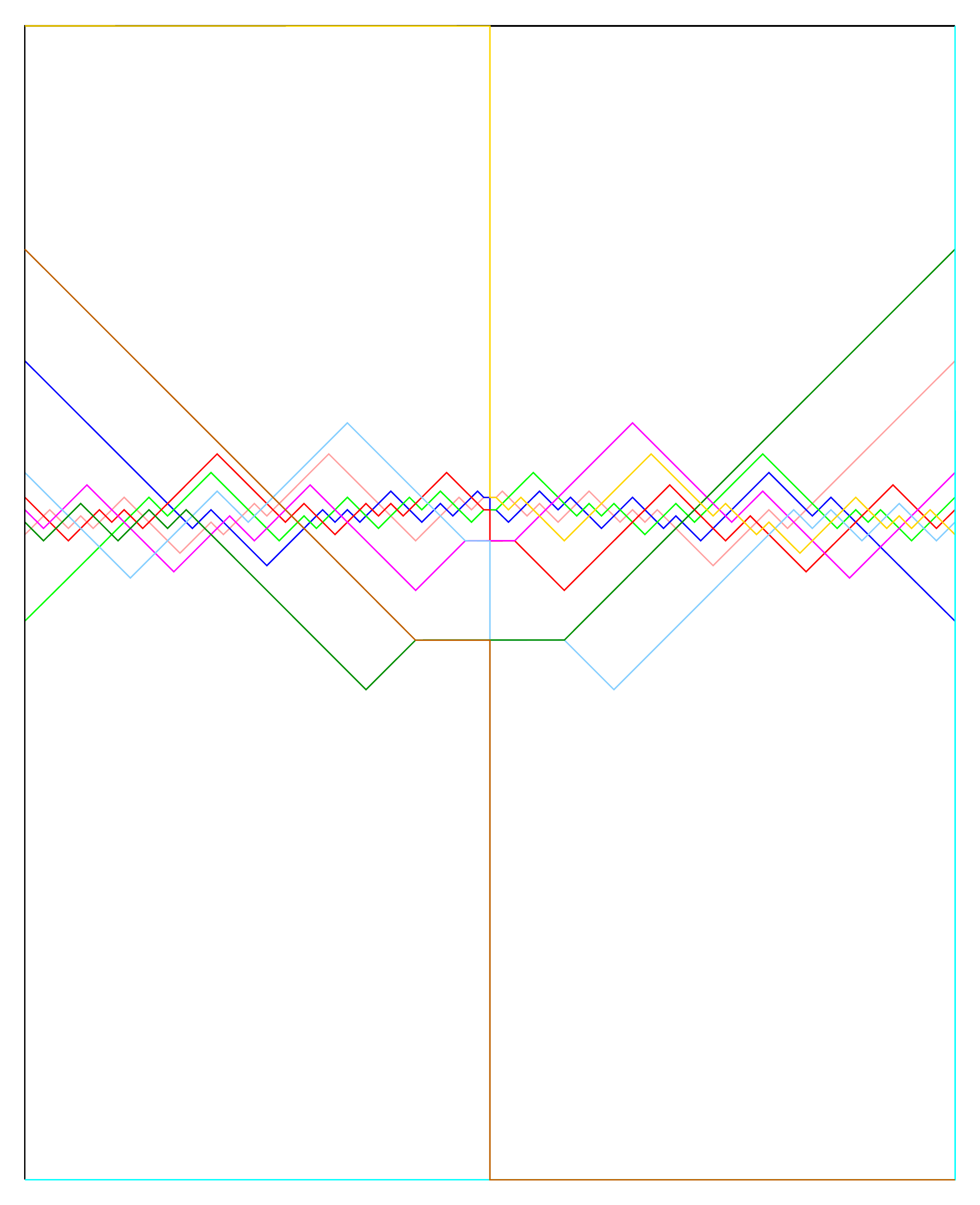}
\caption{(a) Part of two adjacent clusters of Newroz with the crosscut drawn vertically on
  the left and on the right.  The zig-zag starts at the diagonal in the NW corner, and exits
  at the diagonal in the SE corner.  (b) A slightly modified version, showing dihedral symmetry.}
\label{Fig:NewrozDihedral}
\end{figure}

\begin{conjecture}
There is no simple Venn diagram with dihedral symmetry if $n > 3$.
\end{conjecture}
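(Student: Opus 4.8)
The plan is to start from the rigidity forced by symmetry and then localize the whole problem to a single reflection axis, the goal being to show that dihedral symmetry forces one of the two degeneracies already visible in Figure~\ref{Fig:NewrozDihedral}(b): a point where three curves meet, or two curves meeting along an arc. First I would record the reductions. Since a diagram with dihedral symmetry is in particular rotationally symmetric, Henderson's theorem gives that $n$ is prime, so for $n>3$ we have $n$ odd with $n\ge 5$. Fix a reflection $\phi$ in the dihedral group, with axis $\ell$ through the center. The map $\phi$ permutes the $n$ curves by an involution $\sigma$, and because $n$ is odd this involution has a fixed curve $C_j$; hence $C_j$ is symmetric about $\ell$. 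The same parity remark applies to the vertices bounding the (rotation-invariant) innermost and outermost faces: each such face carries $n$ boundary vertices that $\phi$ permutes, so $\phi$ fixes exactly one vertex of each, and these fixed vertices are genuine intersection points lying on $\ell$.

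Next I would pin down the incidence structure along the axis. Because $\phi$ fixes $\ell$ pointwise and sends each non-fixed curve $C_a$ to its partner $C_{\sigma(a)}$, every point of $C_a\cap\ell$ is also a point of $C_{\sigma(a)}\cap\ell$; thus the two curves of each swapped pair meet the axis at exactly the same points, each of which is an intersection point of the diagram. If the diagram is simple and monotone (drawn convexly), each curve meets $\ell$ exactly twice, so the $(n-1)/2$ swapped pairs contribute $(n-1)/2$ axis intersection points on each of the two rays of $\ell$, while $C_j$ meets each ray in a single one-curve point. The resulting count of $n-1$ axis intersection points is even, consistent with the parity forced by the fact that the remaining intersection points fall into free $D_n$-orbits of size $2n$. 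The point of this step is that the local picture is entirely consistent, so the desired contradiction must be global.

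The crux, and the step I expect to be the main obstacle, is to prove that these forced coincidences cannot all be realized simultaneously while keeping the diagram simple. The target contradiction is precisely the defect pair of Figure~\ref{Fig:NewrozDihedral}(b): either two distinct swapped pairs are forced to cross $\ell$ at the same radius (a four-curve point), or a pair is forced to cross where $C_j$ meets $\ell$ (a triple point), which would account for the $(n-3)/2$ triple meetings seen there, while the ``infinitely many intersection points'' degeneracy is what appears when a forced transversal crossing on $\ell$ collapses into a tangency. To rule out a simple realization one would have to argue that the cyclic order in which the pair-crossings occur along $\ell$ --- constrained by monotonicity and by the requirement that all $2^n$ regions be present and connected, in particular the $1$-regions and $(n-1)$-regions nearest the axis --- leaves no way to separate all $n+1$ axis points. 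I expect this to be genuinely hard: it is a statement about global region realizability rather than a local incidence count, and a complete argument must also handle non-monotone drawings, where a curve can cross $\ell$ more than twice and the clean pairing above breaks down.

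A promising alternative route is to push everything through Theorem~\ref{Thm:Crosscut}. One would first show that dihedral symmetry implies crosscut symmetry, so that the diagram is described by a crossing sequence $\rho,\alpha,\delta,\alpha^{r+}$, and then show that the extra reflection imposes a further involutive constraint on the subsequence $\alpha$ itself, beyond the reverse-and-increment relation already tying $\alpha$ to $\alpha^{r+}$. The final step would be to prove that, for $n>3$, no $\alpha$ satisfying the Venn-diagram constraints can meet this extra constraint, leaving the degenerate case $n=3$ (three circles) as the only survivor. Establishing this incompatibility uniformly in $n$, rather than merely verifying it for the small primes reachable by the computer search reported above, is the missing ingredient that keeps the statement a conjecture.
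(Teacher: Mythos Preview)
The statement you are addressing is a \emph{conjecture} in the paper; the authors offer no proof and explicitly leave it open. So there is no paper-proof to compare against, and your write-up is not a proof either --- as you yourself say in the last two paragraphs, the ``crux'' step is not carried out and the ``missing ingredient \ldots\ keeps the statement a conjecture.'' What you have written is an honest research outline, not a proof attempt, and it should be labeled as such.

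On the content of the outline: the opening reductions (Henderson gives $n$ prime; an odd involution on the curve set fixes a curve; swapped curves meet the axis at common points) are sound and are the natural first moves. Two cautions are worth flagging. First, the conjecture is stated for \emph{simple} Venn diagrams without assuming monotonicity, and several of your structural claims (each curve meets $\ell$ twice; the outer face has exactly $n$ boundary vertices) rely on a monotone/convex drawing --- you note this, but it means the axis bookkeeping you set up does not yet cover the full conjecture. Second, the alternative route ``dihedral symmetry $\Rightarrow$ crosscut symmetry, then constrain $\alpha$'' presupposes an implication the paper neither states nor proves; crosscut symmetry is defined via palindromic crossing lists in a particular cluster decomposition, and deducing that combinatorial condition from a geometric reflection would itself require an argument (and again monotonicity, since Theorem~\ref{Thm:Crosscut} is only stated in the monotone case). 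So both of your proposed routes currently pass through unproved lemmas in addition to the acknowledged hard step.
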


\begin{figure}
\centering{
\includegraphics[scale=0.25]{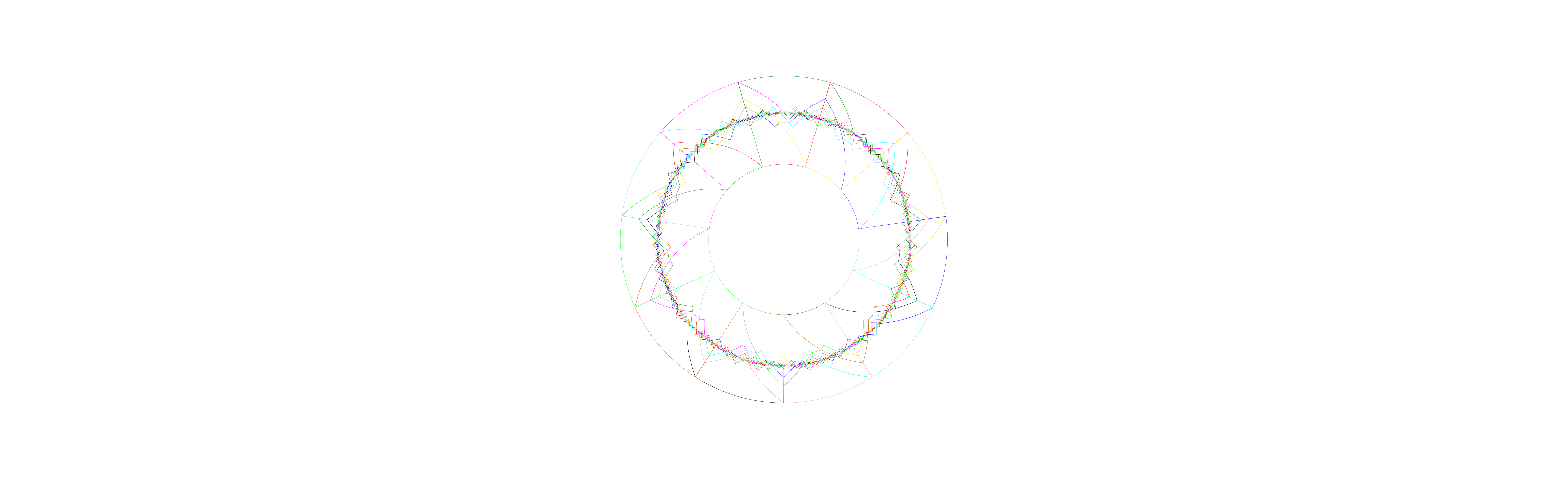}
}
\caption{Newroz drawn with the crosscuts drawn along rays emanating from the center of the diagram, and
with as much dihedral symmetry as possible.}
\label{Fig:NewrozDihedralFull}
\end{figure}

It is natural to wonder whether our method can be used to find a simple symmetric 13-Venn diagram.
So far our efforts have failed.  One natural approach is to try to exploit further symmetries of the
diagram.  In the past, attention has focused on what is known as polar symmetry --- in the cylindrical
representation, this means that the diagram is not only rotationally symmetric but that it is
also symmetric by reflection about a horizontal plane, followed by a rotation; an equivalent definition,
which we use below is that there is a axis of rotation through the equator that leaves the diagram fixed.
In the past polar symmetry did not help in finding 11-Venn diagrams (in fact, no polar symmetric Venn diagrams are yet known), but it is natural
to wonder whether it might be fruitful to search for diagrams that are both polar symmetric and crosscut
symmetric.  However, the theorem proven below proves that there are no such diagrams for primes larger than 7.

Of the four regions incident to an intersection point in a simple Venn diagram, a pair of non-adjacent regions are in the interior of the same number of curves and the number of containing curves of the other two differs by two.   Define a \emph{$k$-point} in a simple monotone Venn diagram to be
   an intersection point that is incident to two $k$-regions (and thus also one $(k-1)$-region and one $(k+1)$-region).
\begin{lemma}\label{Lem:K_Point}
Given a cluster of a simple symmetric monotone $n$-Venn diagram which also has crosscut symmetry, the number of $k$-points on the left side of the crosscut, for $1 \le k < n$ is
\[
\frac{1}{n}\left(\binom{n-1}{k}+(-1)^{k+1}\right).
\]
\end{lemma}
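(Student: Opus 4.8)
The plan is to translate the count of $k$-points into the crossing-sequence representation of Theorem~\ref{Thm:Crosscut}. First I would pin down the local structure of an intersection point: if a crossing-sequence entry has value $i$, it records the crossing of the two curves occupying ray-positions $i$ and $i+1$, and the four regions incident to that point have levels $i-1,\,i,\,i,\,i+1$. Hence an intersection point is a $k$-point precisely when its crossing-sequence entry equals $k$, so counting $k$-points is the same as counting entries of value $k$. By Theorem~\ref{Thm:Crosscut} a cluster is represented by $\rho,\alpha,\delta,\alpha^{r+}$, and the crosscut splits it so that $\rho,\alpha$ (the left zig-zag border together with $\alpha$) records exactly the crossings lying on the left side of the crosscut. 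Writing $P_k$ for the number of $k$-points on the left side, this gives $P_k=\#\{\text{entries of value }k\text{ in }\rho\alpha\}$.

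Next I would compute the per-cluster total $T_k:=\#\{\text{entries of value }k\text{ in }\rho\alpha\delta\alpha^{r+}\}$. Sweeping a ray once around the full diagram, every one of the $\binom{n}{k}$ distinct $k$-regions is born at a unique intersection point, and by the local structure above that point is a $k$-point; thus the whole diagram has exactly $\binom{n}{k}$ $k$-points, and by rotational symmetry each cluster has $T_k=\binom{n}{k}/n$ of them.

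The engine of the proof is a single symmetry observation: the multiset of values of $\delta\alpha^{r+}$ is the multiset of values of $\rho\alpha$ with every element increased by one. This is immediate from the explicit forms in Theorem~\ref{Thm:Crosscut}, since $\delta=n{-}1,\ldots,2$ is the value set $\{1,\ldots,n{-}2\}$ of $\rho$ shifted up by one and $\alpha^{r+}$ is $\alpha$ with each entry incremented. Therefore $\#\{\text{value }k\text{ in }\delta\alpha^{r+}\}=\#\{\text{value }k{-}1\text{ in }\rho\alpha\}=P_{k-1}$, and splitting $T_k$ across the crosscut yields the recurrence $P_k+P_{k-1}=\binom{n}{k}/n$, valid for $1\le k<n$ with base case $P_0=0$ (there are no $0$-points). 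The same recurrence has a purely geometric reading through the crosscut region-pairing, which is a useful consistency check.

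Finally I would solve the recurrence: substituting the claimed closed form and applying Pascal's rule $\binom{n}{k}=\binom{n-1}{k}+\binom{n-1}{k-1}$ together with $(-1)^{k+1}+(-1)^{k}=0$ verifies both $P_k+P_{k-1}=\binom{n}{k}/n$ and $P_0=0$, giving $P_k=\bigl(\binom{n-1}{k}+(-1)^{k+1}\bigr)/n$ as required. I expect the main obstacle to be the two bridging facts that link the geometric count of $k$-points to the crossing sequence, namely the local analysis that crossing-value equals $k$-point level and the boundary-free sweep argument delivering the total $\binom{n}{k}$; the correct identification of the left side with $\rho\alpha$ (including the left border, excluding the crosscut entries themselves) also needs care. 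Once these are settled, the shift-by-one symmetry between $\rho\alpha$ and $\delta\alpha^{r+}$ makes the recurrence and its solution routine.
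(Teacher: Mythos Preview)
Your argument is correct and arrives at the same recurrence $P_k+P_{k-1}=\binom{n}{k}/n$ (with the same closed-form solution) as the paper, but you reach it by a different route. The paper counts $k$-\emph{regions} on the left of the crosscut, uses the geometric crosscut pairing ``left $k$-region $\leftrightarrow$ right $(k{+}1)$-region'' to obtain the recurrence, and only at the end invokes the bijection between $k$-regions and $k$-points. You instead work entirely in the crossing-sequence language of Theorem~\ref{Thm:Crosscut}: identifying $k$-points with entries of value $k$, the left side with $\rho\alpha$, and then using the purely algebraic observation that the value multiset of $\delta\alpha^{r+}$ is that of $\rho\alpha$ shifted up by one. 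This shift-by-one is exactly the crossing-sequence shadow of the paper's region pairing (as you yourself note), so the two proofs are dual to one another; yours has the advantage that it extracts the recurrence directly from the structural form $\rho,\alpha,\delta,\alpha^{r+}$ without re-arguing the region correspondence, at the cost of needing to justify carefully that ``left of the crosscut'' corresponds precisely to the subsequence $\rho\alpha$ in the fundamental domain.
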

\begin{proof}
Let $R_k$ denote the number of $k$-regions on the left side of the crosscut. There is a one-to-one correspondence between the $k$-regions on the left side and $(k+1)$-region on the right side of the crosscut and we know that the total number of $k$-regions in the cluster is ${n \choose k} / n$. Therefore, we can compute the number of $k$-regions on the left side of the crosscut from the following recursion.
\begin{equation}\label{Eq:R}
R_k = \begin{cases}
		1,& \text{if } k=1\\
		\binom{n}{k}/n - R_{k-1}, & \text{if } 1 < k < n-1.
	  \end{cases}
\end{equation}
Unfolding the recurrence (\ref{Eq:R}), we have $nR_k = \sum_{0 \le j \le k-1} (-1)^j{n \choose k-j} = \binom{n-1}{k}+(-1)^{k+1}$.
Every $k$-point in a simple monotone Venn diagram indicates the ending of one $k$-region and starting of another one. Therefore, we have the same number of $k$-points and $k$-regions on the left side of the crosscut, so both of these are counted by $R_k$.
\qed
\end{proof}

\begin{theorem}
There is no monotone simple symmetric $n$-Venn diagram with crosscut and polar symmetry for $n > 7$.
\end{theorem}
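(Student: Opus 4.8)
The plan is to argue by contradiction, translating both symmetries into conditions on the crossing sequence and then showing that these conditions over-determine the free block $\alpha$ once $n$ is large. Suppose $V$ is a monotone simple symmetric $n$-Venn diagram that is both crosscut and polar symmetric. By Theorem \ref{Thm:Crosscut}, the crossing sequence of one cluster of $V$ is $s=\rho,\alpha,\delta,\alpha^{r+}$, and the full crossing sequence is $n$ consecutive copies of $s$. Polar symmetry, in the cylindrical picture, is a $180^\circ$ rotation about an equatorial axis: it reverses the direction of the sweep and exchanges the innermost and outermost curves, sending radial positions $i\mapsto n+1-i$ and hence a crossing of value $i$ to one of value $n-i$. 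Thus $V$ is polar symmetric exactly when its full crossing sequence is carried to itself by \emph{reversal together with the complementation} $i\mapsto n-i$, up to a cyclic rotation of the starting ray and up to the adjacent transpositions of Remark \ref{rem:swap}. Since the full sequence is periodic with period $s$, I would reduce this to a single condition on $s$: its reverse-complement must equal a cyclic rotation of $s$ modulo Remark \ref{rem:swap} swaps.

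Next I would pin down the cyclic alignment using the counts. By Lemma \ref{Lem:K_Point}, together with the left--right correspondence noted after Theorem \ref{Thm:Crosscut}, each value $k$ occurs in $s$ exactly $\binom{n}{k}/n$ times; in particular values $1$ and $n-1$ each occur exactly once, as the leading entries of $\rho$ and of $\delta$. Since reversal-complementation interchanges the values $1$ and $n-1$, the unique $1$ must be sent to the unique $n-1$, which determines the single rotation parameter allowed above; no freedom then remains except the Remark \ref{rem:swap} swaps.

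The central tool for controlling those swaps is the observation that a Remark \ref{rem:swap} swap only transposes values differing by at least $2$, so it never exchanges an entry of value $j$ with an entry of value $j+1$. Hence, for each $j$, the subsequence $P_j$ of $s$ obtained by deleting every entry whose value is not in $\{j,j+1\}$ is invariant under all such swaps and changes only by cyclic rotation under the rotation above. A short computation shows reversal-complementation carries $P_j$ to the reverse-complement of $P_{n-1-j}$, so polar symmetry forces, for every $j$, that the reverse-complement of $P_{n-1-j}$ be a cyclic rotation of $P_j$; for the middle index $j=(n-1)/2$ (an integer since $n$ is odd) this says the two-letter word $P_{(n-1)/2}$ is a cyclic anti-palindrome. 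These are clean, swap-free constraints, and I would read the required alignment off $P_1$ versus the reverse-complement of $P_{n-2}$.

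It then remains to show these constraints are inconsistent for $n>7$, and this is the step I expect to be the main obstacle. With the alignment fixed, the rigid blocks should force the issue: $\rho=1,3,2,5,4,\ldots$ is a fixed ``weave'' whereas $\delta=n-1,n-2,\ldots,2$ is monotone, and the relative orders of the pairs $(j,j+1)$ that they impose on the patterns $P_j$ become incompatible under the reverse-complement identification once $n$ is large; tracing one such pair through the identification should yield the contradiction, with the threshold verified to be exactly $n=7$, matching the existence of Edwards' polar and crosscut symmetric ``Hamilton'' diagram precisely at $n=7$. The difficulty is entirely the slack introduced by Remark \ref{rem:swap}: a direct comparison of $s$ with its reverse-complement is hopeless, since swaps can rearrange long stretches, so the whole argument hinges on replacing that comparison by the swap-invariant patterns $P_j$ and on extracting from them a genuinely $n$-dependent inequality rather than the count-level identities (such as $L_k=L_{n-1-k}$) that hold automatically and therefore cannot by themselves separate $n=7$ from $n=11$.
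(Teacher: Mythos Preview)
Your framework is sound up to the point where you introduce the swap-invariant subsequences $P_j$; that is a correct and clever way to factor out the Remark~\ref{rem:swap} ambiguity. But the proof is not complete: the decisive step, ``tracing one such pair through the identification should yield the contradiction,'' is never carried out. You assert that the fixed shapes of $\rho$ and $\delta$ will clash under the reverse-complement identification, yet you produce no concrete $n$-dependent inequality, and you yourself flag that count-level identities like $L_k=L_{n-1-k}$ hold automatically. Without an explicit obstruction it is not at all clear that your $P_j$ constraints are inconsistent; the large free block $\alpha$ could in principle absorb the alignment, and nothing in your argument yet distinguishes $n=11$ from $n=7$.

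The paper avoids the crossing-sequence bookkeeping altogether and argues geometrically. Since $n$ is odd, the polar involution fixes one crosscut, so its axis passes through the midpoint $x$ of that crosscut; take the equator $\ell$ through $x$ and set $m=(n-1)/2$. On the left of the crosscut $\ell$ meets every $m$-region, and by crosscut symmetry each such $m$-region pairs with an $(m{+}1)$-region on the right with the same boundary length. Polar symmetry then forces every $m$-region on $\ell$ to be fixed by the flip about $\ell$, hence to have an even number ($\ge 4$) of bounding edges, hence to contain at least one $(m{-}1)$-point. Counting with Lemma~\ref{Lem:K_Point} gives $R_m\le R_{m-1}+1$, while
\[
R_m-R_{m-1}=\frac{1}{n}\bigl(c_{m-1}+2(-1)^{m+1}\bigr),
\]
with $c_{m-1}$ a Catalan number, so the inequality fails for all primes $n>7$. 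This is exactly the ``genuinely $n$-dependent inequality'' you were looking for; your combinatorial route may be salvageable, but as written it stops before producing one.
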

\begin{proof}
Let $V$ be a monotone simple symmetric $n$-Venn diagram which has been drawn in the cylindrical representation with both polar and crosscut symmetry,
and let $S$ be a cluster of $V$ with crosscut $C$.  Since the diagram is polar symmetric, $S$ remains fixed under a rotation of $\pi$ radians about some axis through the equator.  Under the polar symmetry action, a crosscut must map to a crosscut, and since there are $n$ crosscuts and $n$ is
odd, one crosscut must map to itself, and so one endpoint of this axis can be taken to be the central point, call it $x$, of some crosscut (the other endpoint of the axis will then be midway between two crosscuts).

Now, consider a horizontal line $\ell$ that is the equator (and thus must include $x$).
Let $m = (n-1)/2$.  The line $\ell$ cuts every $m$-region on the left side of $C$ and every $(m+1)$-region on the right side of the crosscut. Because of the crosscut symmetry of $V$, each $(m+1)$-region on the equator (and so in the interior of $C$) corresponds to the image of some $m$-region on the
equator (and so in the exterior of $C$) with the same number of bounding edges. Therefore, for $V$ to be polar-symmetric, every $m$-region on the left side of the crosscut must be symmetric under a flip about the horizontal line $\ell$, and thus each such region must have an even number of bounding edges.  The number of bounding edges can not be 2, because otherwise it is not a Venn diagram.
Thus a $m$-region on $\ell$ must contain at least one $(m-1)$-point.  One of those $(m-1)$-points might lie on $C$ and thus not be counted by
   $R_{m-1}$, but all other points are counted and so we must have $R_m \le R_{m-1}+1$.  However,
\begin{align*}
R_m - R_{m-1}
&= \frac{1}{n} \left( {2m-2 \choose m} - {2m-2 \choose m-1} + 2 (-1)^{m+1} \right) \\
&= \frac{1}{n} ( c_{m-1} + 2(-1)^{m+1} ),
\end{align*}
where $c_m$ is a Catalan number.  An easy calculation then shows that $R_m \le R_{m-1}+1$ only for the primes $n \in \{2,3,5,7\}$.
\qed
\end{proof}

\section*{Acknowledgements}

The authors are grateful to  Mark Weston at the University of Victoria and to
  Rick Mabry at Louisiana State University in Shreveport for independently verifying that the
  symmetric 11-Venn diagram illustrated in this paper is correct.
We also thank Branko Gr\"{u}nbaum and Anthony Edwards for supplying us with historical background.

\bibliographystyle{plain}

\bibliography{VennBib}{}

\end{document}